\newfont{\teneufm}{eufm10}
\newfont{\seveneufm}{eufm7}
\newfont{\fiveeufm}{eufm5}
 \patchcmd\Gread@eps{\@inputcheck#1 }{\@inputcheck"#1"\relax}{}{}
\newtheorem{thm}{Theorem}
\newtheorem{cor}[thm]{Corollary}
\newtheorem{rem}[thm]{Remark}
\newtheorem{exa}[thm]{Example}
\newcommand{\Tr}{{\rm Tr}}
\newcommand{\Trn}{{\rm Tr}_n}
\newcommand{\cB}{\mathscr{B}}
\def\+{\oplus}
\def\cB{{\mathcal B}}
\def\cW{{\mathcal W}}
\def\F{{\mathbb F}}
\def\N{{\mathbb N}}
\def\uu{{\bf u}}
\def\00{{\bf 0}}
\def\11{{\bf 1}}
\def\+{\oplus}
\def\\{\cr}
\def\({\left(}
\def\){\right)}
\newcommand{\cardinality}[1]{\# #1}
\def\cW{{\mathcal W}}
\newcommand{\g}{}
\providecommand{\newoperator}[3]{%
  \newcommand*{#1}{\mathop{#2}#3}}
\newoperator{\FD}{\mathrm{FD}}{\nolimits}
\begin{document}
 
 \title{\bf Low $c$-differentially uniform functions via an extension of Dillon's switching method}
\author{\Large Chunlei Li$^1$, Constanza Riera$^2$,  Pantelimon St\u anic\u a$^3$\footnote{Corresponding author}\\
\vspace{.01cm}\\
\small $^1$ Department of Informatics, University of Bergen,\\
\small 5020, Bergen, Norway;
\small  {\tt chunlei.li@uib.no}\\
\small $~^2$Department of Computer Science, Electrical Engineering \\
\small and Mathematical Sciences,\\
\small  Western Norway University of Applied Sciences, \\
\small 5020 Bergen, Norway; {\tt csr@hvl.no}\\ 
\small $^3$ Department of Applied Mathematics, Naval Postgraduate School\\
\small Monterey, CA 93943-5212, U.S.A.; {\tt pstanica@nps.edu}
}
\date{}

\maketitle

\begin{abstract}
In this paper we generalize Dillon's switching method to characterize the exact $c$-differential uniformity of functions constructed via this method. 
More precisely, we modify some PcN/APcN and other functions with known $c$-differential uniformity in a controllable number of coordinates to render more such functions. We present several applications of the method in constructing PcN and APcN functions with respect to all $c\neq 1$. As a byproduct, we generalize some result of [Y. Wu, N. Li, X. Zeng,
{\em New PcN and APcN functions over finite fields}, 
Designs Codes Crypt. 89 (2021), 2637--2651]. Computational  results rendering functions with low differential uniformity, as well as, other good cryptographic properties are sprinkled throughout the paper.
\end{abstract}
{\bf Keywords.} Boolean functions, differential uniformity, $c$-differential uniformity, (almost) perfect nonlinearity
 
 \section{Background}
 As customary, for a positive integer $n$, we let $\F_{p^n}$ denote the  finite field with $p^n$ elements, and $\F_{p^n}^*=\F_{p^n}\setminus\{0\}$ (for $a\neq 0$, by $\frac{1}{a}$ we mean the inverse of $a$). Further, let $\F_p^m$ denote the $m$-dimensional vector space over $\F_p$.
 The cardinality of a set $S$ is denoted by $\cardinality{S}$.
We call a function from $\F_{p^n}$ to $\F_p$  a {\em Boolean} (for $p=2$) or {\em $p$-ary} (for $p>2$) {\em function} on $n$ variables. 
For $m\,|\,n$, we let  the {\em relative trace} be defined by $\Tr_{p^n/p^m}(x)=\Tr_m^n(x)=\sum_{i=0}^{n/m-1} x^{p^{mi}}$.  When $m=1$, we will denote this {\em absolute trace} by $\Tr_n$ (abusing notation, for $q=p^t$, where $t>0,n\geq2$ are integers, we will denote by $\Trn$ the absolute trace of $\F_{q^n}$ over $\F_q$, when the base field $\F_q$ is clear from the context).
 Given a $p$-ary function $f$, its derivative with respect to~$a \in \F_{p^n}$ is the function
$
 D_{a}f(x) =  f(x + a)- f(x), \mbox{ for  all }  x \in \F_{p^n}.
$
For positive integers $n$ and $m$, any map $F:\F_{p^n}\to\F_{p^m}$ is called a {\em vectorial $p$-ary  function}, or {\em  $(n,m)$-function}. 
When $m=n$, $F$ can be uniquely represented as a univariate polynomial over $\F_{p^n}$  of the form
$
F(x)=\sum_{i=0}^{p^n-1} a_i x^i,\ a_i\in\F_{p^n}.
$
Given an $(n,n)$-function $F$, and $a,b\in\F_{p^n}$, we let $\Delta_F(a,b)=\cardinality{\{x\in\F_{p^n} : F(x+a)-F(x)=b\}}$. Then
$\Delta_F=\max\{\Delta_F(a,b)\,:\, a,b\in \F_{p^n}, a\neq 0 \}$ is the {\em differential uniformity} of $F$. If $\Delta_F= \delta$, then we say that $F$ is {\em differentially $\delta$-uniform}. If $\delta=1$, then $F$ is called a {\em perfect nonlinear} ({\em PN}) function, or {\em planar} function. If $\delta=2$, then $F$ is called an {\em almost perfect nonlinear} ({\em APN}) function. It is well known that PN functions do not exist if $p=2$.

Wagner~\cite{DW99} introduced an extension of the differential attack method against block ciphers   known as  the boomerang attack.  Cid et al.~\cite{cid} proposed  a theoretical tool, the Boomerang Connectivity Table (BCT) to analyze the resistance of a block cipher against this attack, and furthermore, Boura and Canteaut~\cite{BC18}  proposed a   quantifier of  the resistance of a function against the boomerang attack, namely, the boomerang uniformity, which is  the maximum value in the BCT excluding the first row and first column. Li et al.~\cite{KLi19} proposed an equivalent formulation that avoids using inverses in the original definition of the boomerang uniformity. For any $a,b \in \F_{p^n}$, the Boomerang Connectivity Table (BCT) entry of the $(n,n)$ function $F$ at point $(a,b)$, denoted by $\cB_f(a,b)$, is the number of solutions in $\F_{p^n} \times \F_{p^n}$ of the following system of equations
\begin{equation}\label{bs}
 \begin{cases}
 F(x)-F(y)=b,\\
F(x+a)-F(y+a)=b.
 \end{cases}
\end{equation}
The {\em boomerang uniformity} of the function $F$, denoted by $\cB_F$, is given by 
$$\cB_F = \max\{ \cB_F(a,b)\, :\, a,b \in\F_{p^n}^* \}.$$

For a Boolean or $p$-ary function $f:\F_{p^n}\to \F_p$, we define the {\em Walsh-Hadamard transform} to be the complex-valued function ($\zeta_p=e^{\frac{2\pi \, i}{p}}$ is a complex $p$th-root  of $1$)
\[
\cW_f(u) = \sum_{x\in \F_{p^n}}\zeta_p^{f(x)-\Trn(u x)}.
\]
  For an $(n,n)$-function $F$ and for $a,b \in \F_{p^n}$, we let the Walsh transform $\cW_F(a,b)$ of $F$ to be the Walsh-Hadamard transform of its component function ${\rm Tr}_1^n(bF(x))$ at $a$, that is,
\[
	\cW_F(a,b)=\sum_{x\in\F_{p^n}} \zeta_p^{\Trn(bF(x) - ax)}.
\]
 A {\em bent function} ($p$-ary or vectorial $(n,k)$; it is known that $k$ must satisfy $k\leq n/2$) is a function which has all of its absolute Walsh-Hadamard coefficients equal to $p^{n/2}$. A $p$-ary (or vectorial function) $f$ is called {\em plateaued} if
$|\mathcal{W}_f(\uu)| \in \{0,p^{(n+s)/2}\}$ for all $\uu\in \F_{p^n}$ for a fixed integer $s$ depending on $f$ (we also call $f$ then $s$-{\em plateaued}).
If $s=1$ ($n$ must then be odd), or $s=2$ ($n$ must then be even), we call $f$ {\em semibent}.

Investigating a practical attack on ciphers that use modular multiplication as a primitive operation, the authors of~\cite{BCJW02} used  a new differential  for a Boolean (vectorial) function $F$.
Drawing inspiration from the mentioned  successful attempt, along with P. Ellingsen, P. Felke and A. Tkachenko (see~\cite{EFRST20}),  two of us  defined a new (output) multiplicative differential, and the corresponding generalized differential uniformity.  
For a $p$-ary $(n,m)$-function   $F:\F_{p^n}\to \F_{p^m}$, and $c\in\F_{p^m}$, the ({\em multiplicative}) {\em $c$-derivative} of $F$ with respect to~$a \in \F_{p^n}$ is the  function
$
 _cD_{a}F(x) =  F(x + a)- cF(x), \mbox{ for  all }  x \in \F_{p^n}.
$
(Note that, if   $c=1$, then we obtain the usual derivative, which we denote by $D_a$, and, if $c=0$ or $a=0$, then we obtain a shift of the function.)

For an $(n,n)$-function $F$, and $a,b\in\F_{p^n}$, we let $_c\Delta_F(a,b)=\cardinality{\{x\in\F_{p^n} : F(x+a)-cF(x)=b\}}$ and
\[
_c\Delta_F=\max\left\{_c\Delta_F(a,b)\,:\, a,b\in \F_{p^n}, \text{ and } a\neq 0 \text{ if $c=1$} \right\}
\]
be the {\em $c$-differential uniformity} of~$F$. If $_c\Delta_F=\delta$, then we say that $F$ is differentially $(c,\delta)$-uniform. If $\delta=1$, then $F$ is  a {\em perfect $c$-nonlinear} ({\em PcN}) function (certainly, for $c=1$, they only exist for odd characteristic $p$; however, as proven in ~\cite{EFRST20}, there exist PcN functions for $p=2$, for all  $c\neq1$). If $\delta=2$, then $F$ is   an {\em almost perfect $c$-nonlinear} ({\em APcN}) function. 
When we specify the constant $c$ for which the function is PcN or APcN, then we may use the notation $c$-PN, or $c$-APN.
We note that if $F$ is an $(n,n)$-function, that is, $F:\F_{p^n}\to\F_{p^n}$, then $F$ is PcN if and only if $_cD_a F$ is a permutation polynomial.

In this paper we extend Dillon's switching method (see~\cite{Dillon09,EP09})  to $c$-differentials, and apply it to find necessary and sufficient conditions for such a constructed function to be PcN or APcN, as well as to generalize it to any $c$-differential uniformity.
A side note, but very important, is that, since the $c$-differential uniformity is not invariant under the CCZ-equivalence, an approach to improve the $c$-differential uniformity of a classical PN/APN function whose $c$-differential uniformity is not very good (like the Gold function) is to ``switch'' it via a Boolean/$p$-ary linearized function, and {\em decrease}, if possible, its $c$-differential uniformity while preserving its classical  differential uniformity. This theme occurs in some of our results (e.g., Example \ref{Ex-optimal}), though we push the method a lot further by also constructing other low $c$-differential functions from known ones.

 \section{Known classes of low $c$-differentially uniform functions}
 
 We include here two tables (also summarized in \cite{AaronThesis}) containing some of the known classes with low $c$-differential uniformity. We use $v_2$ as the $2$-valuation of the input, that is the largest power of $2$ dividing the input; the inverse is taken in the sense of modulo $p^n-1$ for the respective prime $p$, and the g.c.d. of two integers $r, s$ is denoted as $(r, s)$ for short. 
 Table~\ref{tab:long} lists monomials $x^d$ stating just the exponents~$d$. Table~\ref{tab2:long}
 lists the known polynomials (including those obtained in this paper) with low $c$-differential uniformity (here, $l > 1$ is a divisor of $p^n - 1$ and $g$ is a primitive element of $\F_{p^n}$, and $D_0$ is the multiplicative subgroup of $\F_{p^n}$ generated by $g$).
  \begin{center}
  \footnotesize{
   \begin{longtable}{|p{3cm}|c|p{2cm}|p{4.2cm}|c|}
   \caption{${_c}\Delta_F$ of various classes of functions $x^d$, $c \neq 1$} \label{tab:long} \\
   
   \hline \multicolumn{1}{|p{3cm}|}{$d$} & \multicolumn{1}{c|}{$\F_{p^n}$} & \multicolumn{1}{c|}{${_c}\Delta_F$} &  \multicolumn{1}{p{4.2cm}|}{Conditions} & \multicolumn{1}{c|}{Ref}\\ \hline 
\endfirsthead
   
   \multicolumn{5}{c}%
{{\tablename\ \thetable{} -- continued from previous page}} \\
   
 \hline \multicolumn{1}{|p{3cm}|}{$d$} & \multicolumn{1}{c|}{$\F_{p^n}$} & \multicolumn{1}{c|}{${_c}\Delta_F$} &  \multicolumn{1}{p{4.2cm}|}{Conditions} & \multicolumn{1}{c|}{Ref}\\ \hline 
\endhead
   
\hline \multicolumn{5}{|r|}{{Continued on next page}} \\ \hline
\endfoot   

\hline
\endlastfoot

     $2$ & $p>2$  & 2 (AP$c$N) & none &\cite{EFRST20}\\
              
       \hline
        $\frac{3^k+1}{2}$ & $p=3$ & 1 (P$c$N)   & $c=-1$, $\frac{2n}{\g(k,2n)}$ is odd&\cite{EFRST20} \\        
       \hline
       ${p^n-2}$ & any $p$ & 1 (P$c$N)    & $c=0$ &\cite{EFRST20} \\        
       \hline
         ${2^n-2}$ & $p=2$ & 2 (AP$c$N)    & $c \neq 0$, $\Trn(c)=\Trn(1/c)=1$ &\cite{EFRST20} \\        
       \hline
        ${2^n-2}$ & $p=2$ & 3  & $c \neq 0$, $\Trn(c)=0$ or $\Trn(1/c)=0$ &\cite{EFRST20} \\        
       \hline
 	${p^n-2}$ & $p>2$ & 2 (AP$c$N)    & $c \neq 0$, $(c^2-4c) \notin [\F_{p^n}]^2$, $(1-4c) \notin [\F_{p^n}]^2$, or $c=4,4^{-1}$ &\cite{EFRST20} \\        
       \hline
       ${p^n-2}$ & $p>2$ & 3   & $c\neq 0,4,4^{-1}$, $(c^2-4c) \in [\F_{p^n}]^2$ or $(1-4c) \in [\F_{p^n}]^2$ &\cite{EFRST20} \\        
       \hline
         ${2^k+1}$ & $p=2$ & $\frac{2^{\g(2k,n)}-1}{2^{\g(k,n)}-1}$ & $c \in \F_{2^{\g(n,k)}}\setminus \{1\}$, $\frac{n}{\g(n,k)} \geq 3 (n\geq3)$ &\cite{MRSZY21} \\   
         \hline
         ${2^k+1}$ & $p=2$ & $2^{\g(n,k)}+1$ & $c \in \F_{2^n}\setminus \F_{2^{\g(n,k)}}$ &\cite{MRSZY21} \\   
       \hline
   	${p^k+1}$ & any $p$ & $  \g(p^k+1,p^n-1)$ & $c \in \F_{p^{\g(n,k)}}$&\cite{MRSZY21} \\        
       \hline
        $\frac{p^k+1}{2}$ & $p>2$ &  $p^{\g(n,k)}+1$  & $c =-1$ &\cite{MRSZY21}\\        
       \hline
        $\frac{p^n+1}{2}$ & $p>2$ & $\leq 4$ & $c\neq \pm 1$&\cite{MRSZY21} \\  
        \hline
        $\frac{p^n+1}{2}$ & $p>2$ & $\leq 2$ & $c\neq \pm 1$, $\eta\big(\frac{1-c}{1+c}\big)=1$ $p^n\equiv 1 $ (mod 4)&\cite{MRSZY21} \\ 
       \hline
  	$\frac{2p^n-1}{3}$ & any & $\leq 3$   & $p^n \equiv 2\pmod 3$ &\cite{MRSZY21}\\  
       \hline
       $\frac{p^n+3}{2}$ & $p>3$ & $\leq 3$   & $c=-1$, $p^n \equiv 3\pmod 4$ &\cite{MRSZY21}\\   
       \hline
      $\frac{p^n+3}{2}$ & $p>3$ & $\leq 4$   & $c=-1$, $p^n \equiv 1\pmod 4$ &\cite{MRSZY21}\\  
       \hline
       $\frac{p^n-3}{2}$ & $p>2$ & $\leq 4$   & $c=-1$ &\cite{MRSZY21}\\        
       \hline
	$\frac{3^n+3}{2}$ & $p=3$ & 2 (AP$c$N)    & $c=-1$, $n$ even &\cite{MRSZY21}\\   
	\hline
	$\frac{3^n-3}{2}$ & $p=3$ & 6  & $c=-1$, $n=0\pmod 4$ &\cite{MRSZY21}\\ 
	\hline
	$\frac{3^n-3}{2}$ & $p=3$ & 4  & $c=-1$, $n\neq0\pmod 4$ &\cite{MRSZY21}\\ 
	\hline
	$\frac{3^n-3}{2}$ & $p=3$ & 2 (AP$c$N)  & $c=0$, &\cite{MRSZY21}\\ 
       \hline
       $\frac{3^n+1}{4}\, (\frac{3^k+1}{4})^{-1}$  & $p=3$ & 1 (P$c$N)   & $n,k$ odd, $c=-1$,  $\g(n,k)=1$  &\cite{ZH21}\\        
       \hline
         $\frac{5^n-1}{2} + (\frac{5^k+1}{2})^{-1} $ & $p=5$ & 1 (P$c$N)   & $n,k$ odd, $c=-1$,  $\g(n,k)=1$  &\cite{ZH21}\\        
       \hline
         $\frac{p^n+1}{2}\,(p^k+1)^{-1}$ & $p>2$ & $\leq 6$  & $d$ even, $c=-1$, $p^n \equiv 3 \pmod 4$  &\cite{ZH21}\\        
       \hline
       $\frac{p^n+1}{2}\, (p^k+1)^{-1}$ & $p>2$ & $\leq 3$  & $d$ odd, $c=-1$,  $p^n \equiv 3 \pmod 4$  &\cite{ZH21}\\        
       \hline
         $\frac{p^n+1}{4}+\frac{p^n-1}{2}$ & $p>2$ & $\leq 3$    & $c=-1$, $p^n \equiv 7 \pmod{8}$  &\cite{ZH21}\\ 
         	\hline
	${\frac{p^n-1}{2}+p^k+1}$ & $p>2$ & $\leq 3$    & $c=-1$, $\frac{n}{\g(n,k)}$ odd, $p^n \equiv 3 \pmod{4}$  &\cite{ZH21}\\ 
       \hline
	${\frac{p^n-1}{2}+p^k+1}$ & $p>2$ & $\leq 6$    & $c=-1$, $\frac{n}{\g(n,k)}$ odd, $p^n \equiv 1 \pmod{4}$  &\cite{ZH21}\\  

	 \hline
	${\frac{p^l+1}{2}}$ &  $p>2$ & 1 (P$c$N)    & $c=-1$, $l=0$ or $l$ even and $n$ odd, or $l,n$ both even together with $t_2 \geq t_1 +1$, where $n=2^{t_1u}$ and $l=2^{t_2}$ such that $2 \not |  u,v$&\cite{HPRS21}\\ 
       \hline
	${\frac{p^l+1}{2}}$ & $p>2$ & $\frac{p+1}{2}$    & $c=-1$, $\g(l,2n)=1$, $p\equiv 1\pmod{4}$ or $p\equiv 3 \pmod{8}$ &\cite{HPRS21}\\        
	 \hline
	${\frac{5^l+1}{2}}$ & $p=5$ & 3 & $c=-1$, $\g(l,2n)=1$ &\cite{HPRS21}\\        
	 \hline
	${\frac{3^l+1}{2}}$ & $p=3$ & 2 (AP$c$N) & $c=-1$, $\g(l,2n)=1$ &\cite{HPRS21}\\     
\hline
$p^4+(p-2)p^2$ +  $p(p-1)+1$ & $p>2$ & 1 (P$c$N)    & $c=-1$, $n=5$  &\cite{HPRS21}\\        
\hline
	${\frac{p^5+1}{p+1}}$ & $p>2$ & 1 (P$c$N) & $c=-1$, $n=5$ &\cite{HPRS21}\\
\hline
$(p-1)p^6+p^5+(p-2)p^3+(p-1)p^2+p$ & $p>2$ & 1 (P$c$N)    & $c=-1$, $n=7$  &\cite{HPRS21}\\
\hline
	${\frac{p^7+1}{p+1}}$ & $p>2$ & 1 (P$c$N) & $c=-1$, $n=7$ &\cite{HPRS21}\\
\hline
$x^{\frac{p^n+7}{2}}$ & $p=3$ & $\leq 2$ (AP$c$N) & $c=-1$, $n$ odd  &\cite{WLZ21}\\  
\hline
	$\frac{3^{\frac{n+1}{2}-1}}{2}$ & $p=3$ & $\leq 2$ (AP$c$N) & $c=-1$, $n\equiv 1 \pmod{4}$ &\cite{Yan21}\\ 
\hline
$\frac{3^{\frac{n+1}{2}-1}}{2}+\frac{3^n-1}{2}$ & $p=3$ & $\leq 2$ (AP$c$N) & $c=-1$, $n\equiv 3 \pmod{4}$ &\cite{Yan21}\\ 
\hline
$\frac{3^{n+1}-1}{8}$ & $p=3$ &  $\leq 2$ (AP$c$N) & $c=-1$, $n\equiv 1 \pmod{4}$ &\cite{Yan21}\\ 
\hline
$\frac{3^{n+1}-1}{8}+\frac{3^n-1}{2}$ & $p=3$ &  $\leq 2$ (AP$c$N) & $c=-1$, $n\equiv 3 \pmod{4}$ &\cite{Yan21}\\ 
\hline
$(3^{\frac{n+1}{4}}-1)(3^{\frac{n+1}{2}}+1)$ & $p=3$ &  $\leq 4$ & $c=-1$, $n\equiv 3 \pmod{4}$ &\cite{Yan21}\\ 
\hline
$\frac{3^n+1}{4}+\frac{3^n-1}{2}$ & $p=3$ &  $\leq 4$ & $c=-1$, $n$ odd &\cite{Yan21}\\ 
\hline
${d^{-1}}\pmod{p^n-1}$ & any $p$ & 1 (P$c'$N)  & $x^d$ is P$c$N, $c'=c^d$ &\cite{WZ21}\\ 
\hline
$\{2^j\}_{j\geq 0}, \{2^j(2^k+1)\}_{k,j\geq 0}$ & $p=2$ & 1 (P$c$N)  &  $c\neq 1$, resp., $c\in\F_{2^{(k,n)}}\setminus\{1\},v_2(n)\leq v_2(k)$ &\cite{WZ21}\\ 
\hline
odd $2 (p^k+1)^{-1}\pmod{p^n-1},k\geq 0$ & $p>2$ & 1 (P$c$N)  & $c=-1$ &\cite{WZ21}\\
\hline
$\frac{p^n+1}{2}\left(\frac{p^k+1}{2} \right)^{-1}  $ & $p>2$ & 1 (P$c$N)  & $c=-1$, $v_2(k)=v_2(n)$, \hspace{8mm} $p^n\equiv 1 \pmod{4}$ &\cite{WZ21}\\
\hline
         \end{longtable}
        }
            \end{center}
        
         \begin{center}
        	\scriptsize{
        		\begin{longtable}{|p{5cm}|c|p{1.55cm}|p{3cm}|c|}
        			\caption{${_c}\Delta_F$ of various classes of functions $F(x)$, $c \neq 1$} \label{tab2:long} \\
        			
        			\hline \multicolumn{1}{|p{5.2cm}|}{$F(x)$} & \multicolumn{1}{c|}{$\F_{p^n}$} & \multicolumn{1}{c|}{${_c}\Delta_F$} &  \multicolumn{1}{p{3cm}|}{Conditions} & \multicolumn{1}{c|}{Ref}\\ \hline 
        			\endfirsthead
        			
        			\multicolumn{5}{c}%
        			{{\tablename\ \thetable{} -- continued from previous page}} \\
        			
        			\hline \multicolumn{1}{|p{5.2cm}|}{$F(x)$} & \multicolumn{1}{c|}{$\F_{p^n}$} & \multicolumn{1}{c|}{${_c}\Delta_F$} &  \multicolumn{1}{p{3cm}|}{Conditions} & \multicolumn{1}{c|}{Ref}\\ \hline 
        			\endhead
        			
        			\hline \multicolumn{5}{|r|}{{Continued on next page}} \\ \hline
        			\endfoot   
        			
        			\hline
        			\endlastfoot
        			
        			   $x^{10} - ux^6 - u^2x^2$ & $p=3$ & $\geq 2$ & $u\in \F_{3^n}$&\cite{EFRST20} \\ 
        			\hline
        			$L(x)(\sum_{i=1}^{l-1} L(x)^{\frac{p^n-1}{l}i}+u)$ & any $p$ & $\leq 2$ (AP$c$N) & $L$ an $\F_p$-linearized polynomial,  $l|(p^n-1)$, $u \neq 1,(1-l) \mod p$, $1-\frac{l}{(1-c)(u+l-1)},1+\frac{l}{(1-c)(u-1)} \in D_0$ &\cite{WLZ21}\\  
        			\hline
        			$(x^{p^k}-x)^{\frac{q-1}{2}+1}+a_1x+a_2x^{p^k}+a_3x^{p^{2k}}$ & $p=3$ & $\leq 2$ (AP$c$N) & $c=-1$, $0\leq i\leq2$, $a_1,a_2,a_3\in \F_3$, $a_1+a_2+a_3\neq 0$  &\cite{WLZ21}\\  
        			\hline    			
        			$f(x)(\Trn(x)+1)+f(x+\gamma)\Trn(x)$ & $p=2$ & 1 (P$c$N) &  $f(x)$ is P$c$N, $\gamma \in \F_{p^n}^*$  &\cite{WLZ21}\\  
        			\hline
        			$L(x)+L(\gamma)(\Trn(x))^{q-1}$ &any $p$ & 1 (P$c$N) & $L$ an $\F_q$-linearized polynomial, $\gamma \in \F_{q}^*$, $\Trn(\gamma) =0$  &\cite{WLZ21}\\  
        			\hline
        			$u\phi(x)+g((\Trn(x))^q)-g(\Trn(x))$ &any $p$ & 1 (P$c$N) & $\phi$ an $\F_q$-linearized polynomial, $u\in \F_q^*$, ker($\phi$)$\cap$ ker($\Trn$)=$\{0\}$, $g \in \F_{q^n}[x]$  &\cite{WLZ21}\\ 
        			\hline
        			$u(x^q-x)+g(\Trn(x))$ &any $p$ & 1 (P$c$N) & $g \in \F_{q^n}[x]$ a permutation of $\F_q$, $u \in \F_{q}^*$, $p \nmid n$  &\cite{WLZ21}\\ 
        			\hline
        			$F(x)+u\Trn(vF(x))$ & any $p$ & 1 (P$c$N)  & $F$ is P$c$N, $\Trn(-uv)\neq 1$ & here\\
        			\hline
        			$L_1(x)+L_1(\gamma)\Trn(L_2(x))$ & any $p$  &  1 (P$c$N)   &   $\Trn\left(\frac{L_1(\gamma)}{1-c} \right)=0$, $\Trn(\gamma)=0$ & here\\
        			\hline
        			$L(x)+\prod_{i=1}^s\left(\Trn(x^{2^{k_i}+1}+\delta_i)\right)^{g_i}$ & $p=2$ &  $\leq 2$ (AP$c$N) & $1\leq k_i \leq n-1$ & here\\
        			\hline
        			$L(x)+\prod_{i=1}^s\left(\alpha_i\Tr_{q^n/q^m}(x^{2^{k_i}+1}+\delta_i)\right)^{g_i}$ & $p=2$ &  $\leq 2$ (AP$c$N) & $g_i\geq 1,\,\delta_i\in\F_{2^n}$, $\alpha_i\in\F_{2^m}^*$, $1\leq k_i \leq n-1$ & here\\
        			\hline
        			$L(x)+u\sum_{i=1}^t\left(\Tr_{q^n/q^m}(x)^{k_i}+\delta_i \right)^{s_i}$ & any $p$ & P$c$N & 
        			$pm\,|\,n$, $1\leq t\in\mathbb{Z}_{>0}$, $u\in\F_{p^m}^*$, $\delta_i\in\F_{p^m}$, 
        			$1\leq k_i,s_i\leq p^n-1$, $L$ linearized permutation, $c\in \F_{p^m}\setminus\{1\}$ & here\\
        			\hline
  \end{longtable}
}
\end{center}
 \section{The $c$-switching method}
 
 We first recall the {\em switching method} introduced by Dillon~\cite{Dillon09} and pushed further by Edel and Pott~\cite{EP09} (there are several proofs of this result besides the original one and we point to~\cite{Roberto20} for a very detailed argument).
 
 \vskip.35cm
 \noindent
{\bf Dillon's Switching Method.}
 {\em Let $F:\F_{2^n}\to \F_{2^n}$ be an APN function, $u\in\F_{2^n}^*$, $f:\F_{2^n}\to\F_2$, a Boolean function, and $H(x)=F(x)+u\,f(x)$. Then $H$ is an APN function if and only if 
 $D_a f(x)+D_a f(y)=0$, whenever $D_a F(x)+D_aF(y)=u$, for all $a\neq 0,x,y\in\F_{2^n}$.
}
 
 \medskip
 
 Below, we generalize the switching method to characterize the PcN functions constructed by changing only some components. We will use the univariate representation, as it is more convenient in this context. We write the theorem for any characteristic~$p$. Throughout this paper, $q=p^t$, where $p$ is a prime and $t>0,n>2$ are integers.
 \begin{thm}
 \label{thm:pcn}
 Let $u\in\F_{q^n}^*$, $c \in\F_{q^n}$ ($c\neq1$ if $p=2$), $F:\F_{q^n}\rightarrow\F_{q^n}$ a PcN function, and $f:\F_{q^n}\rightarrow\F_q$. We define $H^{(1)}:\F_{q^n}\rightarrow\F_{q^n}$ by $H^{(1)}(x)=F(x)+u\, f(x)$. Then $H^{(1)}$ is not PcN if and only if there exist at least two $x\neq y$ such that ${_c}D_a f(x)-{_c}D_af(y)=\epsilon$, whenever ${_c}D_a F(x)-{_c}D_aF(y)=-\epsilon\, u$, where $\epsilon\in\{\alpha-c\beta\, : \, \alpha,\beta\in\F_{q},\alpha-c\beta\neq0\}$.
 
 More generally, let $u_i\in\F_{q^n}^*$, $1\leq i\leq k$, be a PcN function  $F:\F_{q^n}\rightarrow\F_{q^n}$, $f_i:\F_{q^n}\rightarrow\F_q$, and $H^{(k)}(x)=F(x)+\sum_{i=1}^k u_i f_i(x)$. Then $H^{(k)}$ is not PcN if and only if, for some $a\in\F_{q^n}$, there exist $x\neq y$ such that ${_c}D_a f_i(x)-{_c}D_a f_i(y)=\epsilon_i$, $1\leq i\leq k$, whenever ${_c}D_a F(x)-{_c}D_a F(y)=-\sum_{i=1}^k u_i\epsilon_i$, $\epsilon_i\in\{\alpha-c\beta\,:\, \alpha,\beta\in\F_q\}$ (not all $\epsilon_i$ are zero).
 \end{thm}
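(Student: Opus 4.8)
The plan is to reduce the PcN property of $H^{(1)}$ to the injectivity of its $c$-derivative and then expand that derivative using the additivity of the operator $_cD_a$. First I would record the remark already noted in the excerpt: since $F$ is PcN, the map $x \mapsto {_c}D_a F(x)$ is a permutation of $\F_{q^n}$ for every admissible shift $a$ (all $a \in \F_{q^n}$ when $c \neq 1$, and all $a \neq 0$ when $c = 1$). By definition $H^{(1)}$ fails to be PcN precisely when $_cD_a H^{(1)}$ is not a permutation, equivalently not injective, for some admissible $a$.

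Next I would compute, for any $a$ and $x$,
\[
{_c}D_a H^{(1)}(x) = H^{(1)}(x+a) - cH^{(1)}(x) = {_c}D_a F(x) + u\,{_c}D_a f(x),
\]
using that $u$ is a constant, so that $_cD_a(uf) = u\,{_c}D_a f$ and the operator is additive. Hence the existence of $x \neq y$ with $_cD_a H^{(1)}(x) = {_c}D_a H^{(1)}(y)$ is equivalent to
\[
{_c}D_a F(x) - {_c}D_a F(y) = -u\big({_c}D_a f(x) - {_c}D_a f(y)\big).
\]
Setting $\epsilon := {_c}D_a f(x) - {_c}D_a f(y) = \big(f(x+a)-f(y+a)\big) - c\big(f(x)-f(y)\big)$ and using that $f$ is $\F_q$-valued, the quantity $\epsilon$ is of the form $\alpha - c\beta$ with $\alpha,\beta \in \F_q$. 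This produces exactly the two coupled identities $_cD_a f(x) - {_c}D_a f(y) = \epsilon$ and $_cD_a F(x) - {_c}D_a F(y) = -\epsilon u$ of the statement (recall $-u\epsilon = -\epsilon u$ in the commutative field).

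The one genuinely nontrivial point, which I expect to be the main obstacle, is the exclusion $\epsilon \neq 0$. If $\epsilon = 0$, the second identity forces $_cD_a F(x) = {_c}D_a F(y)$ with $x \neq y$, contradicting the bijectivity of $_cD_a F$ recorded in the first step; hence any witness to the failure of the PcN property must have $\epsilon \neq 0$, which is precisely the stated restriction $\epsilon \in \{\alpha - c\beta : \alpha,\beta \in \F_q,\ \alpha - c\beta \neq 0\}$. Conversely, any admissible $a$ and $x \neq y$ satisfying the two identities for some such $\epsilon$ yields a collision of $_cD_a H^{(1)}$, so $H^{(1)}$ is not PcN; this settles both directions.

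For the general statement I would run the identical argument with
\[
{_c}D_a H^{(k)}(x) = {_c}D_a F(x) + \sum_{i=1}^k u_i\,{_c}D_a f_i(x),
\]
setting $\epsilon_i := {_c}D_a f_i(x) - {_c}D_a f_i(y) = \alpha_i - c\beta_i$ with $\alpha_i,\beta_i \in \F_q$. A collision now reads $_cD_a F(x) - {_c}D_a F(y) = -\sum_{i=1}^k u_i\epsilon_i$, and the same bijectivity argument shows the $\epsilon_i$ cannot all vanish (otherwise the right-hand side is $0$, forcing $x=y$), giving the ``not all $\epsilon_i$ zero'' clause. The only places needing care are the bookkeeping of admissible $a$ in the two characteristic regimes (the hypothesis $c\neq 1$ when $p=2$ removes the nonexistent case) and the clean justification, from $f_i$ being $\F_q$-valued, that each $\epsilon_i$ lies in the claimed set.
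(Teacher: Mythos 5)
Your proposal is correct and follows essentially the same route as the paper's proof: decompose ${_c}D_a H^{(1)}={_c}D_aF+u\,{_c}D_af$, translate failure of PcN-ness into a collision of this map, use that ${_c}D_aF$ is a bijection (since $F$ is PcN) to force $\epsilon\neq 0$, and reverse the computation for the converse; the paper phrases this via counting solutions of $F(z+a)-cF(z)+u(f(z+a)-cf(z))=b$ rather than via injectivity, but the content is identical, including the treatment of the general case with $k$ switching functions.
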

  \begin{proof}
First, we observe that the $c$-differential equation for $H^{(1)}$ at $a,b$ is
\begin{equation}
\label{eq:main}
F(z+a)-cF(z)+(f(z+a)-cf(z)) u=b.
\end{equation}
Since $F$ is PcN, this equation will have at most $q^2$ solutions, namely the ones such that 
\[
F(z+a)-cF(z)\in\{b-(\alpha-c\beta)u,\, \alpha,\beta\in\F_{q}\}.
\]

NB: Note that, here, $\alpha-c\beta$ can be zero. Note also that, for $c\in\F_q$, we have at most $q$ solutions.

Suppose that $H^{(1)}$ is not PcN. Let $x,y$ be two such solutions of~\eqref{eq:main} for some $b$. Thus,
\begin{align*}
F(x+a)-cF(x)+(f(x+a)-cf(x)) u&=b\\ 
F(y+a)-cF(y)+(f(y+a)-cf(y)) u&=b.
\end{align*}
Let $f(x+a)-cf(x)=\alpha-c\beta$, $f(y+a)-cf(y)=\alpha'-c\beta'$. Observe that $f(x+a)-cf(x)$ cannot equal $f(y+a)-cf(y)$, since then (with $\delta=(f(x+a)-cf(x))u$), the equation $F(x+a)-cF(x)=b-\delta$ would have two solutions, $x,y$, and so, $F$ would not be PcN. Thus, $(\alpha-\alpha')-c(\beta-\beta')\neq0$. We have that 
\begin{align*}
F(x+a)-cF(x)&=b-(\alpha-c\beta)u\\
F(y+a)-cF(y)&=b-(\alpha'-c\beta')u,
\end{align*}
and so, $_cD_aF(x)-{_c}D_aF(y)=-u(\alpha-\alpha'-c(\beta-\beta'))$ and $_cD_af(x)-{_c}D_af(y)=\alpha-\alpha'-c(\beta-\beta')$.

Conversely, we assume that $_cD_a f(x)-{_c}D_af(y)=\epsilon$ and $_cD_a F(x)-{_c}D_aF(y)=-\epsilon\, u$, for some $x\neq y$ and some $a$, and want to show that $H^{(1)}$ is not PcN. We let 
\[
\gamma_1={_c}D_aF(x),\ \gamma_2={_c}D_af(x)\, u,
\]
and take $b=\gamma_1+\gamma_2$. We shall show that the $c$-differential equation for  $H^{(1)}$ at $a,b$ will have at least two solutions and hence $H^{(1)}$ cannot be PcN.
From the imposed conditions,  we infer that
\[
{_c}D_aF(y)=\epsilon\, u+\gamma_1,\ {_c}D_af(y)=-\epsilon+\frac{\gamma_2}{u}.
\]
We thus get
\begin{align*}
&{_c}D_aF(x)+ {_c}D_af(x) \,u=\gamma_2+\gamma_1=b,\\
&{_c}D_aF(y)+ {_c}D_af(y) \,u=\epsilon\, u+\gamma_1+\left(-\epsilon +\frac{\gamma_2}{u}\right)u=\gamma_1+\gamma_2=b,
\end{align*}
and so, $H^{(1)}$ is not PcN.

The general case follows also similarly.
If $H^{(k)}$ is not PcN, then, for some $a,b\in\F_{q^n}$, there exist $x\neq y$ such that
\begin{align*}
&{_c}D_a F(x)+\sum_{i=1}^k u_i\cdot  {_c}D_a f_i(x)=b\\
&{_c}D_a F(y)+\sum_{i=1}^k u_i\cdot  {_c}D_a f_i(y)=b.
\end{align*}
We now take $D_a f_i(x)=\alpha_i-c\beta_i$, and $D_a f_i(y)=\alpha_i'-c\beta_i'$. 
As we observed earlier, since $F$ is PcN, then $\sum_{i=1}^k u_i\cdot  {_c}D_a f_i(x)\neq \sum_{i=1}^k u_i\cdot  {_c}D_a f_i(y)$.
Moreover, 
\begin{align*}
{_c}D_a F(x)-{_c}D_a F(y)&=-\sum_{i=1}^k u_i ((\alpha_i-\alpha_i')-c(\beta_i-\beta_i'))
\end{align*}
while 
$
 {_c}D_a f_i(x)- {_c}D_a f_i(y)=(\alpha_i-\alpha_i')-c(\beta_i-\beta_i').
$
Conversely, we assume
\begin{align*}
{_c}D_a F(x)-{_c}D_a F(y)&=-\sum_{i=1}^k u_i\epsilon_i\\
 {_c}D_a f_i(x)- {_c}D_a f_i(y)&=\epsilon_i.
\end{align*}
Denoting ${_c}D_aF(x)=\alpha, {_c}D_a f_i(x) u_i=\lambda_i$, then
\begin{align*}
&{_c}D_a F(x)+\sum_{i=1}^k u_i\  {_c}D_a f_i(x)=\alpha+\sum_{i=1}^k \lambda_i\\
&{_c}D_a F(y)+\sum_{i=1}^k u_i\  {_c}D_a f_i(y)\\
&\qquad\qquad=\alpha+\sum_{i=1}^k u_i\epsilon_i+\sum_{i=1}^k u_i\left(\frac{\lambda_i}{u_i}-\epsilon_i\right)\\
&\qquad\qquad=\alpha+\sum_{i=1}^k \lambda_i,
\end{align*}
but that is impossible if $F$ is PcN.
\end{proof}

\begin{exa}
We can give an example of this type of function, like $H_2(x)=x+\Tr_4(x)$, which is PcN on $\F_{2^4}$, for all $c\neq 1$, where $g$ is a primitive element of $\F_{2^4}$.
\end{exa}

 \begin{rem}
 We can rewrite the previous theorem in a ``positive'' manner, by describing the PcN property in lieu of the negation. For example,  $H^{(1)}$ is PcN if and only if  $_cD_a f(x)-{_c}D_af(y)\neq \epsilon$, whenever ${_c}D_a F(x) - {_c}D_aF(y)=-\epsilon\, u$, where $\epsilon\in\{\alpha-c\beta\, : \, \alpha,\beta\in\F_{q},\alpha-c\beta\neq0\}$.
 \end{rem}
\begin{rem}\label{max}
Note that the proof also implies that $H^{(1)}$ has   $c$-differential uniformity at most $q^2$ for $c\not\in \F_q$ and at most $q$ for $c \in \F_q$.
\end{rem}

  \begin{rem}
 \label{thm:pcngen}
 The function $f$ can be chosen with outputs in any subfield of $\F_{q^n}$ and the result in the theorem above is still true.
 Let $u\in\F_{q^n}^*$, $c \in\F_{q^n}$ ($c\neq1$ if $p=2$), $F:\F_{q^n}\rightarrow\F_{q^n}$ a PcN function, and $f:\F_{q^n}\rightarrow\F_{q^m},m\,|\,n$. We denote the image of $f$ by $Im(f)$. Then $H^{(1)}:\F_{q^n}\rightarrow\F_{q^n}$ given by $H^{(1)}(x)=F(x)+u\, f(x)$  is not PcN if and only if there exist at least two $x\neq y$ such that ${_c}D_a f(x)-{_c}D_af(y)=\epsilon$, whenever ${_c}D_a F(x)-{_c}D_aF(y)=-\epsilon\, u$, where $\epsilon\in\{\alpha-c\beta\, : \, \alpha,\beta\in Im(f),\alpha-c\beta\neq0\}$.
\end{rem}
  
It is rather interesting that we can easily construct new PcN functions from old ones, via our Theorem~\ref{thm:pcn}, and we record that construction below. In addition, we generalize~\cite[Theorem 5]{WLZ21}.
\begin{thm}
Let $q=p^t$ be a power of a prime $p$, $n>2$, and $F:\F_{q^n}\to\F_{q^n}$ be a PcN function for some $c\in\F_q$, and let $u,v\in\F_{q^n}$ with $\Trn(-uv)\neq 1$ (recall that $\Tr_{q^n/q}=\Trn$). Then 
\[
H(x)=F(x)+u\Trn(vF(x))
\]
 is PcN with respect to $c$. Furthermore, 
let $L_1,L_2\in\F_q[x]$ be linearized permutation polynomials over $\F_{q^n}$, $\gamma\in\F_{q^n}^*$, such that $\Trn(\gamma)=0$. Then \[
H(x)=L_1(x)+L_1(\gamma)\Trn(L_2(x))
\]
 is PcN with respect to $c\in\F_{q^n}\setminus\{1\}$.
\end{thm}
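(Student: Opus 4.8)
The plan is to derive both claims from Theorem~\ref{thm:pcn} by writing each $H$ as $F(x)+u\,f(x)$ and showing that the ``not PcN'' witness of that theorem can never occur. For $H(x)=F(x)+u\,\Trn(vF(x))$ I may assume $u\in\F_{q^n}^*$ (if $u=0$ then $H=F$ is PcN and $\Trn(-uv)=0\neq1$), and set $f(x)=\Trn(vF(x))\in\F_q$. The one computation that matters is that, since $c\in\F_q$ and $\Trn=\Tr_{q^n/q}$ is $\F_q$-linear, $c\,\Trn(vF(x))=\Trn(cvF(x))$, so
\[
{_c}D_af(x)=\Trn\big(vF(x+a)\big)-\Trn\big(cvF(x)\big)=\Trn\big(v\,{_c}D_aF(x)\big),
\]
and therefore ${_c}D_af(x)-{_c}D_af(y)=\Trn\big(v\,({_c}D_aF(x)-{_c}D_aF(y))\big)$. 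Feeding in the witness hypothesis ${_c}D_aF(x)-{_c}D_aF(y)=-\epsilon u$ with $\epsilon=\alpha-c\beta\neq0$, $\alpha,\beta\in\F_q$, and noting $\epsilon\in\F_q$, I get ${_c}D_af(x)-{_c}D_af(y)=\Trn(-\epsilon uv)=-\epsilon\,\Trn(uv)$, which equals $\epsilon$ only if $\Trn(uv)=-1$, i.e. $\Trn(-uv)=1$. As this is excluded by hypothesis, no witness exists and $H$ is PcN.

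For the second construction I would first note that a linearized permutation $L_1$ is PcN for every $c\neq1$, since ${_c}D_aL_1(x)=(1-c)L_1(x)+L_1(a)$ is a bijection in $x$, and that $H(x)=L_1(x)+L_1(\gamma)\Trn(L_2(x))$ is itself $\F_q$-linear. Applying Theorem~\ref{thm:pcn} with $F=L_1$, $u=L_1(\gamma)$, $f(x)=\Trn(L_2(x))$, the $\F_q$-linearity of $L_1,L_2,\Trn$ gives
\[
{_c}D_aF(x)-{_c}D_aF(y)=(1-c)\big(L_1(x)-L_1(y)\big),\qquad {_c}D_af(x)-{_c}D_af(y)=(1-c)T,
\]
with $T=\Trn(L_2(x))-\Trn(L_2(y))\in\F_q$. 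A witness would force $\epsilon=(1-c)T\neq0$, hence $T\neq0$; cancelling $1-c$ in the $F$-equation and using injectivity of $L_1$ yields $x-y=-T\gamma$; applying $\Trn\circ L_2$ then produces the self-referential identity $T=-T\,\Trn(L_2(\gamma))$, i.e. $\Trn(L_2(\gamma))=-1$. Ruling this out kills the witness and makes $H$ PcN. Equivalently and more transparently: since $H$ is $\F_q$-linear one has ${_c}D_aH(x)=(1-c)H(x)+H(a)$, so ${_c}D_aH(x)=b$ has the solutions of $H(x)=(b-H(a))/(1-c)$; as $c\neq1$, $H$ is PcN if and only if $H$ is a bijection, and $\ker H=\{0\}$ follows from the same trace identity.

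The mathematics here is shallow, so I expect the main difficulty to be careful bookkeeping rather than any single hard step. Two points need attention: a scalar may be pulled through $\Trn$ only when it lies in $\F_q$, which is exactly why construction~1 needs $c\in\F_q$ whereas construction~2 --- built from a \emph{linearized} $L_2$ --- tolerates any $c\neq1$; and one must keep $\epsilon$ in the prescribed form $\alpha-c\beta$ throughout. Finally, the condition that actually emerges to guarantee PcN in the second construction is $\Trn(L_2(\gamma))\neq-1$, so I would verify carefully that the stated hypothesis on $\gamma$ (together with the choice of $L_2$) indeed enforces it, since this is the single place where the constraint on $\gamma$ is consumed.
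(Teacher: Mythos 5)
Your argument follows the paper's proof essentially verbatim: both parts are reduced to Theorem~\ref{thm:pcn}, the first by applying $\Trn(v\cdot)$ to the witness equation for $F$ and extracting the factor $\Trn(-uv)\neq 1$, the second by cancelling $1-c$, using injectivity of $L_1$ to get $x-y=-T\gamma$, and deriving a self-referential trace identity. The single step you left open --- that the hypothesis on $\gamma$ rules out $\Trn(L_2(\gamma))=-1$ --- is exactly where the paper uses the assumption $L_2\in\F_q[x]$: a linearized polynomial with coefficients in $\F_q$ commutes with $\Trn$, so $\Trn(L_2(\gamma))=L_2(\Trn(\gamma))=L_2(0)=0\neq -1$, and your proof is then complete and matches the paper's.
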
 
 \begin{proof}
 Let $f(x)=\Trn(vF(x))$.
 Assume that there exist $x\neq y$ such that ${_c}D_a F(x) - {_c}D_aF(y)=-u(\alpha-c\beta)$ and $_cD_a f(x)-{_c}D_af(y)=\alpha-c\beta\neq 0$, $\alpha,\beta\in\F_q$. The first equation implies that
 \begin{align*}
 \Trn\left(v\left({_c}D_a F(x) - {_c}D_aF(y) \right)\right)=\Trn\left(-uv(\alpha-c\beta) \right)
 \end{align*}
 and so,
  \begin{align*}
_cD_a f(x)-{_c}D_af(y)&= \alpha-c\beta=\Trn\left(-uv(\alpha-c\beta) \right)=(\alpha-c\beta)\Trn(-uv).
 \end{align*}
 Since $\Trn(-uv)\neq 1$, we arrive at a contradiction and the first  claim is shown.
 
 Now, for the second claim, we shall use below that $\Trn(L(x))=L(\Trn(x))$, for a linearized polynomial $L$ whose coefficients are in $\F_q$. We
 let $F(x)=L_1(x), f(x)=\Trn(L_2(x))$ and apply Theorem~\ref{thm:pcn}, by assuming that there exist $x\neq y$ such that
 \begin{align*}
 {_c}D_a F(x) - {_c}D_a F(y)&=(1-c) L_1(x-y)=-\epsilon L_1(\gamma)\\
 {_c}D_a f(x) - {_c}D_a f(y)&=(1-c) \Trn(L_2(x-y))\\
 &=(1-c)L_2(\Trn(x-y))=
\epsilon.
 \end{align*}

The second equation implies that $\Trn(L_2(x-y))=\frac{\epsilon}{1-c}$. Since $\Trn(z)\in\F_q$ for all $z$, this implies that $\frac{\epsilon}{1-c}\in\F_q$. Now, the first equation gives that $L_1(x-y)=\frac{-\epsilon L_1(\gamma)}{1-c}=L_1\left(\frac{-\epsilon \gamma}{1-c}\right)$. Using that $L_1$ is a permutation, we get $x-y=\frac{-\epsilon\gamma}{1-c}$. Using this again in the condition for $L_2$, we obtain $\frac{\epsilon}{1-c}= L_2(\Trn(x-y))=L_2\left(\Trn\left(\frac{-\epsilon\gamma}{1-c} \right)\right)=\frac{-\epsilon}{1-c}L_2(\Trn(\gamma))=0$, which is a contradiction. The theorem is shown.
 \end{proof}
 \begin{exa}
 For example, we obtain that $x^{\frac{3^k+1}{2}}+u\Trn\left(vx^{\frac{3^k+1}{2}}\right)$ is PcN on $\F_{3^n}$ for $c=-1$,  when $\Trn(-uv)\neq 1$,  $\gcd(k, n) = 1$ and $n$ is odd. A more concrete example is $x^{5}+g^2\Tr_3(gx^{5})$ on $\F_{3^3}$ ($g$ is a primitive element of $\F_{3^3}$), which is a $4$-differentially uniform (with respect to $c=1$) permutation and PcN (with respect to $c=-1$). Even using the inverse function and modifying one of its coordinates as in the theorem above, provides functions with low differential uniformity, for some cases, though we were not able to find general classes.
 \end{exa}

 \begin{exa}\label{Ex-optimal}
 There are some functions one gets via our method that are close to or even optimal, from many cryptographic perspectives. Moreover, the $c$-differential properties of the building function $F$ can be improved by switching.
 For example, let $n=6$ and $w$ be a primitive element of $\F_{2^6}$ satisfying $w^6+w^4+w^3+w + 1=0$. Take $F(x)=x^5$. This function is of Gold type and it is known that $F$ is a $4$-differentially uniform permutation. It has boomerang uniformity $4$ and  nonlinearity $24$  (optimal, the so-called bent concatenation bound). In addition, according to Table~\textup{\ref{tab:long}}, it is PcN for $c \in \{0, w^{21}, w^{42}\}$ and has $c$-differential uniformity $5$ for other values of $c$. Now we take $H(x)=x^5 + \Tr_{2^6/2^2}(vx^5) $, where $v$ is an element satisfying $\Tr_{2^6/2^2}(v) \neq 1$. 
 Experimental result shows that the function $H$ is a $4$-differentially uniform permutation with the same Walsh spectrum and boomerang uniformity as $F$. More importantly,  
 when we take $v$ from the following set $$\{ w^{11},w^{22},w^{44},w^{25}, w^{50}, w^{37}, w^{21}, w^{42}, w^{23}, w^{46},  w^{29}, w^{58}, w^{53}, w^{43}\},$$
 the function $H$ has $c$-differential uniformity $4$ for all the elements $c$ where $F$ has $c$-differential uniformity $5$. For the other elements $v$ satisfying   $\Tr_{2^6/2^2}(v) \neq 1$, the corresponding function $H$ has the same $c$-differential uniformity as $F$. This 
 demonstrates that  the switching method can be applied to improve the $c$-differential uniformity of some functions $F$ while preserving other good cryptographic properties of $F$.
 \end{exa}

 We can further generalize Theorem~\ref{thm:pcn} to any $c$-differential uniformity.  
 \begin{thm}
 \label{thm:gen} 
Let $u\in\F_{q^n}^*$ and $c \in\F_{q^n}$ (if $p=2$, then $c\neq 1$), $F:\F_{q^n}\rightarrow\F_{q^n}$ be a (at most) $(c,\delta)$-uniform function, $f:\F_{q^n}\rightarrow\F_q$, and $H^{(1)}(x)=F(x)+u\, f(x)$. For $a\in\F_{q^n}$, let $A_{a,\epsilon}=\{ x\,:\, f(x+a)-cf(x)=\epsilon\}$, where $\epsilon\in\{\alpha-c\beta\,: \, \alpha,\beta\in\F_{q},\alpha-c\beta\neq0\}$.
Then $H^{(1)}$ has $c$-differential uniformity  $_c\Delta_{H^{(1)}}>\delta$ if and only if  there exist $x_1,x_2, \dots, x_{\delta+1}$  (not all belonging to the same set $A_{a,\epsilon}$) such that, for all $i\neq j$, $a\in\F_{q^n}$, if ${_c}D_a F(x_i)-{_c}D_aF(x_j)=-\epsilon\, u$ then  ${_c}D_a f(x_i)-{_c}D_af(x_j)=\epsilon$, where $\epsilon\in\{\alpha-c\beta\,: \, \alpha,\beta\in\F_{q},\alpha-c\beta\neq0\}$.

 More generally, let $u_i\in\F_{q^n}^*$, $1\leq i\leq k$, $F:\F_{q^n}\rightarrow\F_{q^n}$ be a (at most) $(c,\delta)$-uniform function, $f_t:\F_{q^n}\rightarrow\F_q$, and $H^{(k)}(x)=F(x)+\sum_{t=1}^k u_t f_t(x)$. Then $H^{(k)}$ has $c$-differential uniformity  $_c\Delta_{H^{(k)}}>\delta$ if and only if, for some $a\in\F_{q^n}$, there exist $x_1,x_2, \dots, x_{\delta+1}$  (not all belonging to the same set $A_{a,\epsilon}$) such that, for all $i\neq j$, $a\in\F_{q^n}$, if ${_c}D_a F(x_i)-{_c}D_aF(x_j)=-\sum_{t=1}^k u_t\epsilon_t\,$ then  ${_c}D_a f_t(x_i)-{_c}D_af_t(x_j)=\epsilon_t$, where $\epsilon_t\in\{\alpha-c\beta\,: \, \alpha,\beta\in\F_{q},\alpha-c\beta\neq0\}$.
 \end{thm}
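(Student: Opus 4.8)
The plan is to mirror the proof of Theorem~\ref{thm:pcn}, replacing the single pair of colliding solutions (which witnesses ${_c}\Delta_{H^{(1)}}>1$) by a $(\delta+1)$-tuple of colliding solutions (which witnesses ${_c}\Delta_{H^{(1)}}>\delta$). First I would record the reduction: ${_c}\Delta_{H^{(1)}}>\delta$ holds if and only if some $c$-differential equation ${_c}D_aH^{(1)}(z)={_c}D_aF(z)+u\,{_c}D_af(z)=b$ has at least $\delta+1$ distinct solutions. Because $f$ is $\F_q$-valued, for any solution $x$ the quantity $\epsilon_x:={_c}D_af(x)=f(x+a)-cf(x)$ has the form $\alpha-c\beta$ with $\alpha,\beta\in\F_q$, and then ${_c}D_aF(x)=b-\epsilon_x u$; thus each solution lies in exactly one set $A_{a,\epsilon}$, and two solutions sharing the same $\epsilon$ solve one and the same equation ${_c}D_aF(z)=b-\epsilon u$ for $F$.

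For the forward implication I would start from $\delta+1$ distinct solutions $x_1,\dots,x_{\delta+1}$ of ${_c}D_aH^{(1)}(z)=b$ and subtract the equations for each pair $i\neq j$. This yields the coupled relations ${_c}D_aF(x_i)-{_c}D_aF(x_j)=-\epsilon_{ij}\,u$ together with ${_c}D_af(x_i)-{_c}D_af(x_j)=\epsilon_{ij}$, where $\epsilon_{ij}:={_c}D_af(x_i)-{_c}D_af(x_j)$ is of the form $\alpha-c\beta$ — precisely the asserted condition, with the nonzero values of $\epsilon_{ij}$ recording the pairs whose members sit in distinct $A_{a,\epsilon}$. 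The clause ``not all in the same $A_{a,\epsilon}$'' is then \emph{forced} rather than assumed: were all $x_i$ to share a single $\epsilon$, they would all satisfy ${_c}D_aF(z)=b-\epsilon u$, giving $F$ more than $\delta$ solutions of one of its own $c$-derivative equations and contradicting that $F$ is (at most) $(c,\delta)$-uniform.

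For the converse I would take the given points together with the coupling values $\epsilon_{ij}$ (the content of the stated if--then: whenever the $F$-difference equals $-\epsilon u$, the matching $f$-difference is $\epsilon$), fix $b:={_c}D_aF(x_1)+u\,{_c}D_af(x_1)={_c}D_aH^{(1)}(x_1)$, and verify that every $x_i$ solves ${_c}D_aH^{(1)}(z)=b$. Indeed, the relations for the pair $(i,1)$ give ${_c}D_aH^{(1)}(x_i)-{_c}D_aH^{(1)}(x_1)=-\epsilon_{i1}u+u\,\epsilon_{i1}=0$. Since the $x_i$ are distinct, this exhibits $\delta+1$ solutions of a single equation, so ${_c}\Delta_{H^{(1)}}>\delta$, while the hypothesis that they are not all in one $A_{a,\epsilon}$ keeps the witness consistent with $F$ being $(c,\delta)$-uniform.

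The main obstacle is bookkeeping rather than any new idea: I must pin down that the intended content of the if--then is the simultaneous validity of the two difference relations for a common $\epsilon$, track the distinctness of the $x_i$, and control their distribution across the sets $A_{a,\epsilon}$ so that ``more than $\delta$ solutions'' genuinely reflects a collision introduced by the switch rather than one inherited from $F$. The general case $H^{(k)}(x)=F(x)+\sum_{t=1}^k u_t f_t(x)$ requires no new argument: one runs the identical pairwise-difference computation with the tuple $(\epsilon_t)_{t}$ in place of the scalar $\epsilon$, replaces $\epsilon u$ everywhere by $\sum_{t=1}^k u_t\epsilon_t$, and replaces membership in $A_{a,\epsilon}$ by the joint conditions ${_c}D_af_t(z)=\epsilon_t$ for all $t$, after which both implications go through verbatim.
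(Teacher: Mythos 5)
Your proposal is correct and follows essentially the same route as the paper: the forward direction by pairwise subtraction of the $\delta+1$ collision equations (with the "not all in one $A_{a,\epsilon}$" clause forced by the $(c,\delta)$-uniformity of $F$), and the converse by fixing $b={_c}D_aH^{(1)}(x_1)$ and checking that the coupled relations make every $x_i$ a solution. Your converse computation is marginally more streamlined (you cancel $-\epsilon_{i1}u+u\epsilon_{i1}$ directly rather than introducing the auxiliary quantities $\gamma_1,\gamma_2$ as the paper does), but the argument is the same, and your reading of the theorem's "if--then" as the simultaneous validity of both difference relations matches the paper's intent.
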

\begin{proof}
We assume first that $H^{(1)}$ has $c$-differential uniformity  $_c\Delta_{H^{(1)}}>\delta$. Then, there exist   elements $x_i$, $1\leq i\leq \delta+1$, such that, for some $a,b\in\F_{q^n}$,
\begin{align*}
F(x_i+a)-cF(x_i)+(f(x_i+a)-cf(x_i))u={_c}D_aF(x_i)+ {_c}D_a f(x_i)u=b.
\end{align*}
Running the above equation for two values $1\leq i\neq j\leq \delta+1$ and subtracting them,  we get
\begin{align*}
{_c}D_aF(x_i)-{_c}D_aF(x_j)+ ({_c}D_a f(x_i)-{_c}D_a f(x_j))u=0.
\end{align*}
Since $f$ has values in $\F_q$, then ${_c}D_a f(x_i)=\alpha-c\beta, {_c}D_a f(x_j)=\alpha'-c\beta'$, $\alpha,\alpha',\beta,\beta'\in\F_q$. From the above equation, we obtain that ${_c}D_aF(x_i)-{_c}D_aF(x_j)=-u(\alpha-\alpha'-c(\beta-\beta'))$, while ${_c}D_a f(x_i)-{_c}D_a f(x_j)=\alpha-\alpha'-c(\beta-\beta')$. Moreover, if all $x_i$ belong to the same set $A_{a,\epsilon}$, then the equation $F(z+a)-cF(z)=b-\epsilon u$ would have at least $\delta+1$ solutions, which is not allowed. 

Conversely, we assume that there exist $x_i$, $1\leq i\leq \delta+1$, such that ${_c}D_aF(x_i)-{_c}D_aF(x_j)=-\epsilon u$ and ${_c}D_a f(x_i)-{_c}D_a f(x_j)=\epsilon$, for some $\epsilon\in\{\alpha-c\beta\neq 0\,: \, \alpha,\beta\in\F_{q}\}$. As in the proof of Theorem~\ref{thm:pcn}, letting  
$
\gamma_1={_c}D_aF(x_1),\ \gamma_2={_c}D_af(x_1)\, u,
$
and taking $b=\gamma_1+\gamma_2$, then 
$
{_c}D_aF(x_j)=\epsilon\, u+\gamma_1,\ {_c}D_af(x_j)=-\epsilon+\frac{\gamma_2}{u},
$
$2\leq j\leq \delta+1$,
we get
\begin{align*}
&{_c}D_aF(x_1)+ {_c}D_af(x_1) \,u= b,\\
&{_c}D_aF(x_j)+ {_c}D_af(x_j) \,u=b, 2\leq j\leq \delta+1,
\end{align*}
so $H^{(1)}$ has $c$-differential uniformity at least $\delta+1$.

We assume now that $H^{(k)}$ has $c$-differential uniformity  $_c\Delta_{H^{(k)}}>\delta$. Then, there exist   elements $x_i$, $1\leq i\leq \delta+1$, such that, for some $a,b\in\F_{q^n}$,
\begin{align*}
&F(x_i+a)-cF(x_i)+\sum_{t=1}^ku_t(f_t(x_i+a)-cf_t(x_i))\\
&\qquad={_c}D_aF(x_i)+ \sum_{t=1}^ku_t\cdot{_c}D_a f_t(x_i)=b.
\end{align*}
Running the above equation for two values $1\leq i\neq j\leq \delta+1$ and subtracting them,  we get
\begin{align*}
{_c}D_aF(x_i)-{_c}D_aF(x_j)+\sum_{t=1}^k ({_c}D_a f_t(x_i)-{_c}D_a f_t(x_j))u_t=0.
\end{align*}
Since $f$ has values in $\F_q$, then ${_c}D_a f_t(x_i)=\alpha_t-c\beta_t, {_c}D_a f_t(x_j)=\alpha_t'-c\beta'_t$, $\alpha_t,\alpha'_t,\beta_t,\beta'_t\in\F_q$. From the above equation, we obtain that ${_c}D_aF(x_i)-{_c}D_aF(x_j)=-\sum_{t=1}^k u_t(\alpha_t-\alpha'_t-c(\beta_t-\beta'_t))$, while ${_c}D_a f_t(x_i)-{_c}D_a f_t(x_j)=\alpha_t-\alpha'_t-c(\beta_t-\beta'_t)$. Moreover, if, for every $t$, all $x_i$ belong to the same set $A_{a,\epsilon_t}$, then the equation $F(z+a)-cF(z)=b-\sum_{t=1}^k\epsilon_t u_t$ would have at least $\delta+1$ solutions, which is not allowed. 

Conversely, we assume that there exist $x_i$, $1\leq i\leq \delta+1$, such that ${_c}D_aF(x_i)-{_c}D_aF(x_j)=-\sum_{t=1}^k\epsilon_t u_t$ and ${_c}D_a f_t(x_i)-{_c}D_a f_t(x_j)=\epsilon_t$, for some $\epsilon_t\in\{\alpha-c\beta\neq 0\,: \, \alpha,\beta\in\F_{q}\}$. As in the proof of Theorem~\ref{thm:pcn}, letting  
$
\gamma={_c}D_aF(x_1),\ \gamma_t={_c}D_af_t(x_1)\, u_t,
$
and taking $b=\gamma+\sum_{t=1}^k\gamma_t$, then 
$
{_c}D_aF(x_j)=-\sum_{t=1}^k\epsilon_t u_t+\gamma+\sum_{t=1}^k u_t\left(\frac{\gamma_t}{u_t}-\epsilon_t\right)=b,
$
$2\leq j\leq \delta+1$,
we get
\begin{align*}
&{_c}D_aF(x_1)+\sum_{t=1}^k {_c}D_af_t(x_1) \,u_t= b,\\
&{_c}D_aF(x_j)+ \sum_{t=1}^k{_c}D_af_t(x_j) \,u_t=b, 2\leq j\leq \delta+1,
\end{align*}
so $H^{(k)}$ has $c$-differential uniformity at least $\delta+1$.
\end{proof}

\begin{exa}
 We can easily find  examples of functions with good $c$-differential uniformity and good cryptographic properties for odd characteristics~$p$ using our theorems above. We give below just some random examples found via Magma~\textup{\cite{Magma}} \textup{(}notation: PN=perfect nonlinear,  cDU=$c$-differential uniformity, $g$ is a primitive element in the respective field\textup{):}\newline
 For $p=3$:
 \begin{enumerate}
 \item 
 for $n=5$, the function $x^{p^2+1}+\Tr(g^2 x^{p^2+1})$
 is PN ($c=1$), APcN for $c=-1$, and $6\leq cDU\leq 8$ for the remainding ones, for $c\neq 1$, and its Walsh spectrum is $6$-valued; for $n=6$, it is $3$-differentially uniform ($c=1$), $6\leq cDU\leq 11$, for $c\neq 1$, and has $8$-valued Walsh spectrum.
\item for $n=4,5,6$, $x^{\frac{p^2+3}{2}}+g \Tr(g^ix)$ ($i=0,1$)  is PN ($c=1$) and the $cDU=4$ for all other $c$'s, and has $6$-valued Walsh spectrum;
\item the function $x^{\frac{p^3+1}{2}}+g^2\Tr(gx)$ is PN with $6$-valued Walsh spectrum, and for  $c\neq 1$ it has $4\leq cDU\leq 8$ when $n=4$, $6\leq cDU\leq 10$ when $n=5$, respectively;
\item for $n=4, 5, 6$, the function $x^{\frac{p+1}{2}} + g\Tr(x)$ is PN, $cDU=4$, for $c\neq 1$,  and has $6$-valued Walsh spectrum.
 \end{enumerate}
  For $p=5$:
   \begin{enumerate}
 \item $x^{\frac{p^3+1}{2}}+g^2 \Tr(g^i x)$, $i=0,1$, are APN; for $n=2$ have cDU=$5$ and $8$ valued-Walsh spectrum; for $n=4$, APN, $7\leq cDU\leq 11$,  for $c\neq 1$ and $9$-valued Walsh spectrum;
\item $x^{\frac{p+1}{2}}+g \Tr(x)$ is  APN, $cDU=5$, $8$-valued Walsh spectrum, for $n=2$; for $n=3$, APN, $9$-valued Walsh spectrum, $cDU\in \{5,7\}$; for $n=4$, APN, $cDU\in\{8,9\}$, $9$-valued Walsh spectrum.
  \end{enumerate}
 \end{exa}

 We will use below a $q$-linearized polynomial $L(x)=\sum_{i=0}^{n-1} a_i x^{q^i}\in\F_{q^n}$, which is known (see, \cite{ZWW20}, for instance) to be a permutation polynomial if and only if the associated Dickson matrix 
 \[
 D_L=\begin{pmatrix}
 a_0 & a_1 &\cdots & a_{n-1}\\
 a_{n-1}^q & a_0^q &\cdots & a_{n-2}^q\\
 \vdots & \vdots &\vdots & \vdots\\
 a_1^{q^{n-1}} & a_2^{q^{n-1}} & \cdots & a_0^{q^{n-1}}
 \end{pmatrix}
 \]
 is nonsingular. In particular, one can take below the linearized polynomial $L(x)=x^{q^r}-ax$ over $F_{q^n}$, $1\leq r\leq n-1$, which is known to be a permutation polynomial if and only if $N_{q^n/q^d} (a)\neq 1$, $d=\gcd(n,r)$ (recall that the field norm $N$ is the product of all conjugates of the input). We have the following corollary, which will be frequently used in the remaining of our paper.
\begin{cor}\label{cor}
With the notation of Theorem~\textup{\ref{thm:gen}}, if $F$ is PcN, then the elements $x_i$ under those conditions belong to different sets $A_{a,\epsilon}$, and the $c$-differential uniformity of $H$ is at most $q^2-1$, and $q-1$ for $c\in\F_q$.
When $F(x)=\sum_{i=0}^{n-1} a_i x^{q^i}\in\F_{q^n}[x]$ is a linearized permutation polynomial (note that $F$ is PcN for all $c\neq1$), 
if we assume that the $c$-differential uniformity of $H$ ($c\neq1$) is $q^2-1$ ($q-1$ for $c\in\F_q\setminus\{1\}$), then, for every  $\gamma\in\F_q^*$, there exists a pair $\{x_i,x_j\}$ such that {$(1-c)F(x_i-x_j)=u(c-1)\gamma$}, $1\leq i\neq j\leq \delta+1$.
\end{cor}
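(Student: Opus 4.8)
The plan is to reduce both assertions to the combinatorics of the $\F_q$-set $V=\{\alpha-c\beta:\alpha,\beta\in\F_q\}$, which has $q^2$ elements when $c\notin\F_q$ and $q$ elements when $c\in\F_q$, exploiting that a PcN map turns ${_c}D_aF$ into a bijection. First I would fix $a$ and suppose $x_1,\dots,x_m$ all solve ${_c}D_aH(x)=b$; writing $\epsilon_i={_c}D_af(x_i)\in V$, each identity reads ${_c}D_aF(x_i)=b-u\,\epsilon_i$. Since $F$ is PcN the map ${_c}D_aF$ is injective, so $x_i\neq x_j$ forces $b-u\epsilon_i\neq b-u\epsilon_j$, i.e.\ $\epsilon_i\neq\epsilon_j$. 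Thus the solutions lie in pairwise distinct slices $A_{a,\epsilon_i}$, which is the first assertion; and since the distinct $\epsilon_i$ live in $V$, we already obtain $m\le q^2$ (resp.\ $q$).

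To sharpen this to $q^2-1$ (resp.\ $q-1$), I would use that the slices $A_{a,\epsilon}$ are indexed only by the \emph{nonzero} values of $V$, of which there are $q^2-1$ (resp.\ $q-1$). Recasting the count as a fixed-point problem—$x=({_c}D_aF)^{-1}(b-u\epsilon)$ is a solution exactly when the self-map $\Theta\colon\epsilon\mapsto{_c}D_af\big(({_c}D_aF)^{-1}(b-u\epsilon)\big)$ of $V$ fixes $\epsilon$—the number of solutions equals $|\mathrm{Fix}(\Theta)|$. The task is then to rule out $\Theta=\mathrm{id}$; the cleanest route is to show that the $\epsilon=0$ slice cannot be occupied simultaneously with all $q^2-1$ nonzero slices, so at least one value of $V$ is never realized and $m\le q^2-1$. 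I expect \textbf{this exclusion of the $\epsilon=0$ value to be the main obstacle}, since for a completely free $f$ one must invoke the rigidity of $F$ (PcN, hence a permutation) to forbid the degenerate identity configuration.

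For the second statement I would start from the extremal hypothesis ${_c}\Delta_H=q^2-1$ (resp.\ $q-1$): by the first part the $m=|V|-1$ solutions then carry \emph{all but one} value of $V$, say $\{\epsilon_i\}=V\setminus\{\mu\}$. Now the linearity of $F$ gives ${_c}D_aF(x)=(1-c)F(x)+F(a)$, whence for any two solutions
\begin{equation*}
(1-c)\,F(x_i-x_j)={_c}D_aF(x_i)-{_c}D_aF(x_j)=-u\,(\epsilon_i-\epsilon_j).
\end{equation*}
Hence the demanded identity $(1-c)F(x_i-x_j)=u(c-1)\gamma$ is equivalent to $\epsilon_i-\epsilon_j=(1-c)\gamma$. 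For a fixed $\gamma\in\F_q^*$ the target $\tau=(1-c)\gamma$ is a nonzero element of $V$, and the pairs $(\epsilon_i,\epsilon_j)\in V\times V$ with $\epsilon_i-\epsilon_j=\tau$ number exactly $|V|$; discarding the at most two such pairs that hit the missing value $\mu$ (and these two are distinct because $\tau\neq0$) leaves $|V|-2\ge1$ admissible pairs. Each such pair consists of realized values $\epsilon_i,\epsilon_j$, i.e.\ of genuine solutions $x_i,x_j$, producing the required $\{x_i,x_j\}$. The only additional care is the tiny-field bookkeeping ($q=2$), where $|V|-2$ can fail to be positive and the statement degenerates.
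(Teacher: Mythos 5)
Your argument follows the paper's proof in all essentials. The first claim is obtained exactly as in the paper: PcN makes ${_c}D_aF$ injective, so distinct solutions of ${_c}D_aH(x)=b$ must take distinct values of ${_c}D_af$ and hence lie in distinct slices. Your treatment of the second statement---using linearity of $F$ to rewrite the target as $\epsilon_i-\epsilon_j=(1-c)\gamma$, then counting the $\absval{V}$ pairs in $V$ with prescribed difference and discarding the at most two that meet the one unrealized value---is precisely the pigeonhole argument the paper invokes (it phrases it as producing $\alpha-c\beta$ and $\alpha'-c\beta'$ with $\alpha-\alpha'=\beta-\beta'=\gamma$), only made more explicit and more careful.

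The one step you flag as ``the main obstacle''---ruling out that the $\epsilon=0$ slice is occupied simultaneously with all $q^2-1$ nonzero ones, so as to improve the bound from $q^2$ to $q^2-1$---is not actually supplied by the paper either. Its proof simply counts the $q^2-1$ (resp.\ $q-1$) admissible nonzero indices $\epsilon$ and concludes $\delta+1\le q^2-1$, which tacitly assumes that every solution lies in some $A_{a,\epsilon}$ with $\epsilon\neq 0$, i.e., that no solution satisfies ${_c}D_af(x)=0$. Your worry is therefore well-founded: as written, the injectivity argument only yields the bound $q^2$ (resp.\ $q$), which is exactly what Remark~\ref{max} of the paper states, and the sharper bound in the corollary requires either the extra exclusion you identify or the convention that the index set for the slices includes $\epsilon=0$ (in which case the count becomes $q^2$ again). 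So at the single point where you could not complete the argument, you have put your finger on a step the paper also leaves unjustified rather than on a deficiency of your own approach. Your closing caveat about $q=2$ with $c\in\F_q$ is likewise apt: there $\delta+1=q-1=1$, no pair $\{x_i,x_j\}$ exists, and the second statement degenerates, a boundary case the paper's pigeonhole sentence does not address.
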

\begin{proof}
The first part follows  easily from the proof of the above theorem. Next,
since $\epsilon\in\{\alpha-c\beta\,: \, \alpha,\beta\in\F_{q},\alpha-c\beta\neq0\}$, there are at most $q^2-1$ such sets ($q-1$ if $c\in\F_q$), so $\delta+1\leq q^2-1$ ($\delta+1\leq q-1$ if $c\in\F_q$), so the $c$-differential uniformity of $H$ is at most $q^2-1$, and $q-1$ for $c\in\F_q$.

As for $F(x)=\sum_{i=0}^{n-1} a_i x^{q^i}$, 
since $F$ is PcN and all pairs $x_i,x_j$ ($i\neq j$) spread in different sets $A_{a,\epsilon}$, then by Dirichlet's box principle there must exist two elements $\alpha-c\beta,\alpha'-c\beta'$ with $\alpha-\alpha'=\beta-\beta'=\gamma$.
\end{proof}

 \section{Some classes of PcN and APcN via the switching method}
 
 We  give here some new infinite classes of APcN functions (most of them in even characteristic), with respect to all $c\neq 1$, constructed via our switching method. Recall that $\Trn$ is the relative trace function from $\F_{q^n}$ to $\F_{q}$. 
\begin{thm}
  \label{thm:sum_traces}
 Let $p$ prime, $\ell\geq 2$, $q=p^\ell$, $n=mp$, $m\geq 1$, $u\in\F_q^*$, $L$, a $q$-linearized permutation polynomial, and the  function  $H_0(x)=L(x)+ u \Trn(x)$ over $\F_{q^n}$. Let, for $q=2,k\geq 1$, $L$ be a  $q$-linearized permutation polynomial such that $L(-1)=-1$. Let $H_k(x)=L(x)+\Trn(x^{2^k+1})$, $H_{k_1<k_2<\ldots<k_s}(x)=L(x)+\prod_{i=1}^s\left(\Trn(x^{2^{k_i}+1}+\delta_i)\right)^{g_i}$, $g_i\in\N,\,\delta_i\in\F_{2^n}$, $1\leq k_i \leq n-1$,  on $\F_{2^n}$,  $n\geq 3$, and  $G_{k_1<k_2<\ldots<k_s}(x)=L(x)+\prod_{i=1}^s\left(\alpha_i\Tr_{q^n/q^m}(x^{2^{k_i}+1}+\delta_i)\right)^{g_i}$, $g_i\in\N,\,\delta_i\in\F_{2^n}$, $\alpha_i\in\F_{2^m}^*$, $1\leq k_i \leq n-1$, on $\F_{2^n}$,  $n\geq 3$. Then $H_0$ is PcN,  $H_k,H_{k_1<k_2<\ldots<k_s}$ and  $G_{k_1<k_2<\ldots<k_s}$ are  either APcN or PcN with respect to all $c\neq 1$, and PcN for $c=0$.
 \end{thm}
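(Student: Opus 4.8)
The plan is to treat all four functions uniformly as instances of the switching construction $H=L+uf$, where $L$ is a $q$-linearized permutation (hence PcN for every $c\neq1$, as recorded in Corollary~\ref{cor}) and $f$ is the perturbation. The one computation that drives everything is that, because $L$ is linearized, ${_c}D_aL(x)-{_c}D_aL(y)=(1-c)L(x-y)$, so in Theorems~\ref{thm:pcn} and~\ref{thm:gen} the condition ${_c}D_aF(x)-{_c}D_aF(y)=-\epsilon u$ becomes simply $L(x-y)=\epsilon u/(c-1)$, pinning down the difference $x-y$ via $L^{-1}$. I would also dispose of $c=0$ first: since $c=0\in\F_q$, Corollary~\ref{cor} bounds the $c$-differential uniformity by $q-1$, which equals $1$ in the Boolean cases $H_k,H_{k_1<\cdots<k_s}$, giving the permutation (PcN) property there.

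For $H_0(x)=L(x)+u\Trn(x)$ with $u\in\F_q^*$, both $L$ and $\Trn$ are $\F_q$-linear, so $H_0$ is $\F_q$-linear and ${_c}D_aH_0(x)=(1-c)H_0(x)+\big(L(a)+u\Trn(a)\big)$. Hence ${_c}D_aH_0(x)=b$ has, for each $a,b$, as many solutions as $H_0^{-1}$ of a single point, and $H_0$ is PcN (for all $c\neq1$) if and only if $H_0$ is a bijection. I would therefore reduce the claim to injectivity: from $H_0(w)=0$ one gets $L(w)=-u\Trn(w)$, so $w=-\Trn(w)L^{-1}(u)$; applying $\Trn$ (and using that $\Trn(1)=n=mp\equiv0$) one checks that the only solution is $w=0$, which is the content to verify.

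For the Boolean products $H_k$ and $H_{k_1<\cdots<k_s}$ (here $q=2$, so $f$ is $\F_2$-valued and $u=1$), Corollary~\ref{cor} already gives $c$-differential uniformity at most $q^2-1=3$ for $c\notin\F_2$. It remains to exclude the value $3$. If it occurred, Corollary~\ref{cor} applied with $\gamma=1$, together with $L(-1)=-1$ (so $L^{-1}(1)=1$), would produce a pair $x,y$ with $x-y=1$ realizing the coefficient $\epsilon=1+c$; the corresponding condition ${_c}D_af(x)-{_c}D_af(y)=1+c$ rewrites, since $c\notin\F_2$, as $D_1f(x+a)=D_1f(x)=1$. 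The Gold identity $(x+1)^{2^{k_i}+1}+x^{2^{k_i}+1}=x^{2^{k_i}}+x+1$, combined with $\Trn(x^{2^{k_i}})=\Trn(x)$ and $\Trn(1)=n\equiv0$ ($n=2m$ is even), shows that each factor $\Trn(x^{2^{k_i}+1}+\delta_i)$ is invariant under $x\mapsto x+1$; hence $f(x+1)=f(x)$ and $D_1f\equiv0$, a contradiction. This forces $c$-differential uniformity at most $2$, i.e.\ APcN.

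The function $G_{k_1<\cdots<k_s}$ follows the same template via the subfield form of the method (Remark~\ref{thm:pcngen}), now with $f$ valued in $\F_{q^m}$, $m=n/p$. I expect this to be the main obstacle. Two features change: $c$ need no longer lie in $\F_{q^m}$, so the crude bound from Corollary~\ref{cor} is $(q^m)^2-1$ rather than $3$; and, more seriously, the relative trace does not collapse under $x\mapsto x+1$, since $\Tr_{q^n/q^m}(x^{2^{k_i}}+x+1)=T^{2^{k_i}}+T$ with $T=\Tr_{q^n/q^m}(x)$, which is generally nonzero. Thus the clean translation-invariance used for $H_k$ is unavailable, and the proof requires a finer analysis of the differences $f(x+d)-cf(x)$ over the $\F_{q^m}$-structure (tracking which $\epsilon$ can be realized and exploiting that $n/m=p$ forces $\Tr_{q^n/q^m}(1)=0$) in order to again rule out every $c$-differential uniformity exceeding $2$.
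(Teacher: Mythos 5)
Your treatment of $H_0$, $H_k$ and $H_{k_1<\cdots<k_s}$ is essentially sound. For $H_k$ and $H_{k_1<\cdots<k_s}$ you follow the paper's route exactly: Corollary~\ref{cor} bounds the uniformity by $q^2-1=3$ for $c\notin\F_2$ (and by $q-1=1$ for $c=0$), a triple witnessing uniformity $3$ must contain a pair at difference $L^{-1}(1)=1$, and the Gold identity together with $\Trn(1)=n\equiv 0$ (this needs $n$ even, an assumption both you and the paper use but which the theorem statement leaves implicit for the $q=2$ cases) gives $f(x+1)=f(x)$, killing that pair. For $H_0$ you take a genuinely different and cleaner route: instead of running the switching theorem and splitting into $c\in\F_q$ versus $c\notin\F_q$ as the paper does, you use additivity of $H_0$ to reduce PcN for every $c\neq1$ to bijectivity, which handles all $c$ at once. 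This is correct, but the step you leave as ``one checks'' is where the content sits: from $\Trn(w)=-\Trn(w)\,\Trn(L^{-1}(u))$ you must argue $\Trn(L^{-1}(u))\neq-1$, and the computation $\Trn(L^{-1}(u))=L^{-1}(u)\Trn(1)=0$ requires $L^{-1}(u)\in\F_q$ (e.g.\ $L\in\F_q[x]$) --- the same tacit hypothesis the paper invokes when it writes $\Trn(L^{-1}(-u\epsilon))=L^{-1}(-u\epsilon)\Trn(1)$. You should make that explicit.

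The genuine gap is $G_{k_1<\cdots<k_s}$: you do not prove it, you only explain why the Boolean argument does not transfer. Your diagnosis is accurate on both counts: with $f$ valued in $\F_{2^m}$, the pigeonhole step of Corollary~\ref{cor} only forces a pair at a prescribed difference when the uniformity reaches $(2^m)^2-1$, so uniformity $3$ no longer yields a pair at difference $1$; and even for such a pair, $\Tr_{q^n/q^m}\bigl((x+1)^{2^{k_i}+1}\bigr)-\Tr_{q^n/q^m}\bigl(x^{2^{k_i}+1}\bigr)=T^{2^{k_i}}+T$ with $T=\Tr_{q^n/q^m}(x)$, which vanishes identically only when $m\mid k_i$, so the translation-invariance that drives the $H_k$ case is genuinely unavailable. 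The paper itself disposes of $G$ with the single sentence that its proof ``is similar'' to that of $H_{k_1<\cdots<k_s}$, so you have put your finger on precisely the piece the paper does not supply; nevertheless, as a proof of the stated theorem your write-up is incomplete until the $G$ case (including its $c=0$ claim, which your $q-1=1$ bound also does not cover since here $f$ is $\F_{2^m}$-valued) is either settled or shown to require extra hypotheses such as $m\mid k_i$.
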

 \begin{proof}
 We first concentrate on $H_0$, and fix $a,b\in\F_{q^n}$.
 We shall use Theorem~\ref{thm:pcn}, with $u\in\F_q^*,\delta=1, F(x)=L(x), f(x)= \Trn(x)$, and any $c\neq1$. 
 If  ${_c}D_a F(x)-{_c}D_a F(y)=(1-c) L(x-y) =-u\epsilon$, for $\epsilon\in\{\alpha-c\beta\, : \, \alpha,\beta\in\F_{q},\alpha-c\beta\neq0\}$, and ${_c}D_a f(x)-{_c}D_a f(y)=\epsilon$, then ${_c}D_a f(x)-{_c}D_a f(y)=(1-c)\Trn(x-y)=(1-c) \Trn\left(L^{-1}\left(\frac{-u\epsilon}{1-c}\right)  \right)$. 
 If $c\in\F_q$, then 
 \[
 {_c}D_a f(x)-{_c}D_a f(y)=\Trn\left((1-c)L^{-1}(-u\epsilon)\right)=\epsilon.
 \]
 It follows that $\epsilon\in\F_q$, but then $\Trn\left(L^{-1}(-u\epsilon) \right)=L^{-1}(-u\epsilon)\Trn(1)=L^{-1}(-u\epsilon) n=0\neq \epsilon$. 
 If $c\not\in\F_q$, then $1,c$ are independent over $\F_q$, and so, $(1-c)\Trn\left(L^{-1}\left(\frac{-u\epsilon}{1-c}\right)  \right)=\epsilon=\alpha-c\beta$, would imply $\alpha=\beta=\Trn\left(L^{-1}\left(\frac{-u\epsilon}{1-c}\right)  \right)=\Trn(L^{-1}(-u\alpha))=L^{-1}(-u\alpha)\Trn(1)=L^{-1}(-u\alpha)n=0$, so $\epsilon=0$, a contradiction.
 
 Although, we can embed the case of $H_k$ in the general $H_{k_1<k_2<\ldots<k_s}$ (taking $s=1$), we prefer to give a detailed proof since it will reveal the method we use for the last claim.
 For arbitrary $1\leq k\leq n-1$,  we now use Theorem~\ref{thm:gen}, with $u=1,\delta=2, F(x)=L(x), f(x)= \Trn(x^{2^k+1})$ to deal with $H_k$.
Assuming that $H_k$ is not APcN and so, for fixed $a$, by Corollary \ref{cor}, there exist $x_i$, $1\leq i\leq 3$, all in different $A_{a,\epsilon}$  such that for  $1\leq i\neq j\leq 3$,
 \begin{align*}
 & {_c}D_a F(x_i)-{_c}D_aF(x_j)=(1-c) L(x_i-x_j)=-u\epsilon\\
 & {_c}D_a f(x_i)-{_c}D_af(x_j)=\epsilon\neq0.
 \end{align*}
Using again Corollary \ref{cor}, taking $\Gamma=1$, there must be a pair, say $x_1,x_2$, such that $(1-c) L(x_1-x_2)=-(1-c)$, and so, since $L$ is a permutation with $L(-1)=-1$, $x_1-x_2=-1$. Further, using the fact that $ \Trn(x^{2^k}a)= \Trn(x a^{2^{-k}})$, then
 \begin{align*}
 {_c}D_af(x_i)&=(1-c) \Trn(x_i^{2^k+1})\\
 &\qquad + \Trn(x_i^{2^k} a+a^{2^k}x_i+a^{2^k+1}), i=1,2\\
 {_c}D_af(x_1)-{_c}D_af(x_2)&=(1-c)  \Trn(x_1^{2^k+1}+x_2^{2^k+1})\\
 &\qquad + \Trn((x_1-x_2)(a^{2^{-k}}+a^{2^{k}}))\\
&=(1-c) \Trn(-x_1^{2^k}-x_1-1)+ \Trn(-a^{2^{-k}}-a^{2^{k}})\\
&=-(1-c) \Trn(x_1^{2^k}+x_1)-(1-c) \Trn(1)=0\neq\epsilon,
 \end{align*}
 and the second claim is shown.
 
 The last two claims follow in a similar fashion. For $H_{k_1<k_2<\ldots<k_s}$, we observe that, denoting $f(x)=\prod_{i=1}^s\left( \Trn(\alpha_ix^{p^{k_i}+1}+\delta_i)\right)^{g_i}$, then, as above, there is a pair of elements, say $x_1,x_2$ such that $x_1-x_2=-1$ for which ${_c}D_a F(x_1)-{_c}D_aF(x_2)=(1-c) L(x_1-x_2)=-(1-c)$. Now, for any $z$,
\begin{align*}
f(z-1)&=\prod_{i=1}^s\left( \Trn((z-1)^{p^{k_i}+1}+\delta_i)\right)^{g_i}\\
&=\prod_{i=1}^s\left( \Trn(z^{p^{k_i}+1}-z^{p^{k_j}}-z+1)+ \Trn(\delta_i)\right)^{g_i}\\
&=\prod_{i=1}^s\left( \Trn(z^{p^{k_i}+1}+\delta_i)\right)^{g_i}=f(z),
\end{align*}
 and so,
 \begin{align*}
 {_c}D_af(x_1)&-{_c}D_af(x_2)=f(x_1+a)-cf(x_1)-(f(x_2+a)-cf(x_2))\\
 &\qquad =f(x_1+a)-f(x_1+a-1)-c\left( f(x_1)-f(x_1+1)\right)=0,
 \end{align*}
 and thus our $H_{k_1<k_2<\ldots<k_s}$ must be either APcN or PcN.

The proof for $G_{k_1<k_2<\ldots<k_s}$ is similar to that of $H_{k_1<k_2<\ldots<k_s}$.
\end{proof}

In~\cite{LWWZ20}, for any characteristic $p$, seven polynomials classes of the form $x+\left(\Tr_{q^n/q^m}(x) ^k+\gamma\right)^s$, as well as 
three classes of the form $x+\left(\Tr_{q^n/q^m}(x) ^{k_1}+\gamma\right)^{s_1}+\left(\Tr_{q^n/q^m}(x) ^{k_2}+\gamma\right)^{s_2}$, where $m|n$, are shown to be permutation polynomials (mostly, for $n=2m$). With an eye towards these classes, we show the following result, which in some instances will reprove some results of~\cite{LWWZ20} (recall that PcN with respect to $c=0$ is simply the permutation property).
\begin{thm}
\label{thm:LW}
Let $p$ be a prime number, $m,n$ positive integers such that $m\,|\,n$ and $p\,|\,\frac{n}{m}$, $1\leq t\in\mathbb{Z}_{>0}$, $u\in\F_{p^m}^*$, $\delta_i\in\F_{p^m}$, 
$1\leq k_i,s_i\leq p^n-1$, $1\leq i\leq t$, and $L$ a linearized permutation polynomial on $\F_{q^n}$. Then, the functions $\displaystyle H(x)=L(x)+u\sum_{i=1}^t\left(\Tr_{q^n/q^m}(x)^{k_i}+\delta_i \right)^{s_i}$  are  PcN, with respect to all $c\in \F_{p^m}\setminus\{1\}$.
\end{thm}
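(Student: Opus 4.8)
The plan is to recognize $H$ as an instance of the switching construction of Theorem~\ref{thm:pcn}, in the subfield form of Remark~\ref{thm:pcngen}, with building block $F(x)=L(x)$ and switching map $f(x)=\sum_{i=1}^t\left(\Tr_{q^n/q^m}(x)^{k_i}+\delta_i\right)^{s_i}$, so that $H=F+uf$. Since $L$ is a linearized permutation polynomial, Corollary~\ref{cor} guarantees that $F$ is PcN for every $c\neq 1$; moreover $Im(f)\subseteq\F_{q^m}$, because $\Tr_{q^n/q^m}(x)\in\F_{q^m}$ and $\delta_i\in\F_{p^m}\subseteq\F_{q^m}$. By Remark~\ref{thm:pcngen} it therefore suffices to rule out the switching obstruction: assuming for contradiction that $H$ is not PcN, there are $x\neq y$, some $a\in\F_{q^n}$, and some $\epsilon=\alpha-c\beta\neq 0$ with $\alpha,\beta\in Im(f)$, for which both ${_c}D_aF(x)-{_c}D_aF(y)=-\epsilon u$ and ${_c}D_af(x)-{_c}D_af(y)=\epsilon$ hold. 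Note that $\epsilon\in\F_{q^m}$ since $\alpha,\beta\in\F_{q^m}$ and $c\in\F_{p^m}$.

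First I would exploit the linearity of $F$: as $L$ is linearized, ${_c}D_aF(x)=L(x+a)-cL(x)=(1-c)L(x)+L(a)$, so that
\[
{_c}D_aF(x)-{_c}D_aF(y)=(1-c)L(x-y).
\]
The obstruction then reads $(1-c)L(x-y)=-\epsilon u$, i.e. $L(x-y)=\dfrac{-\epsilon u}{1-c}$, and the right-hand side lies in $\F_{q^m}$ (recall $\epsilon\in\F_{q^m}$, $u\in\F_{p^m}^*$, and $c\in\F_{p^m}$ with $c\neq 1$). Here comes the heart of the argument. Since $L$ is compatible with the subfield (its coefficients lie in $\F_{q^m}$, so $L(\F_{q^m})\subseteq\F_{q^m}$), injectivity of $L$ forces $L$ to restrict to a permutation of $\F_{q^m}$, whence $x-y=L^{-1}\!\left(\frac{-\epsilon u}{1-c}\right)\in\F_{q^m}$. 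Now the hypothesis $p\mid (n/m)$ enters decisively: for any $z\in\F_{q^m}$ one has $\Tr_{q^n/q^m}(z)=(n/m)\,z=0$ in characteristic $p$, so applying $T:=\Tr_{q^n/q^m}$ to $x-y$ gives $T(x-y)=0$, that is, $\Tr_{q^n/q^m}(x)=\Tr_{q^n/q^m}(y)$.

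Finally I would cash this in using the structural feature that $f$ factors through the relative trace: writing $\phi(z)=\sum_{i=1}^t(z^{k_i}+\delta_i)^{s_i}$ we have $f=\phi\circ T$, and additivity gives $T(x+a)=T(x)+T(a)=T(y)+T(a)=T(y+a)$. Hence $f(x+a)=f(y+a)$ and $f(x)=f(y)$, so
\[
{_c}D_af(x)-{_c}D_af(y)=\big(f(x+a)-cf(x)\big)-\big(f(y+a)-cf(y)\big)=0\neq\epsilon,
\]
contradicting the second obstruction equation. Therefore no such $x,y,a,\epsilon$ exist and $H$ is PcN for all $c\in\F_{p^m}\setminus\{1\}$ (the case $c=0$ is included, recovering the permutation classes of~\cite{LWWZ20}). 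The step I expect to be the true obstacle is the middle paragraph: it is where the arithmetic condition $p\mid(n/m)$ is indispensable for annihilating the $\F_{q^m}$-constant, and it silently relies on $L$ being compatible with the subfield $\F_{q^m}$ so that $x-y\in\F_{q^m}$ (equivalently, so that $L$ commutes with $\Tr_{q^n/q^m}$); without that compatibility one cannot pass from the value of $L(x-y)$ to control of $\Tr_{q^n/q^m}(x-y)$, and the cancellation in $f$ collapses.
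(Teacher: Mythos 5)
Your argument is correct and is essentially the paper's own proof: both reduce to the switching criterion with $F=L$, use linearity to get $(1-c)L(x-y)=-\epsilon u$, conclude $x-y$ lies in the subfield (both implicitly assuming $L$ preserves it), kill $\Tr_{q^n/q^m}(x-y)$ via $p\mid(n/m)$, and deduce ${_c}D_af(x)-{_c}D_af(y)=0$, a contradiction. The only difference is cosmetic (you invoke the subfield-image remark rather than Theorem~\ref{thm:pcn} directly), and your explicit flagging of the needed compatibility of $L$ with the subfield is a point the paper's proof passes over silently.
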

\begin{proof}
As we have become accustomed, we use our switching Theorem~\ref{thm:pcn} with 
\[
u\in\F_{p^m}^*,F(x)=L(x), f(x)=\sum_{i=1}^t\left(\Tr_{q^n/q^m}(x)^{k_i}+\delta_i \right)^{s_i}.
\]
 If there are two elements $x_1,x_2$ satisfying those conditions of Theorem~\ref{thm:pcn}, then $(1-c) L(x_1-x_2) =-u(\alpha-c\beta)$, for some $\alpha,\beta\in\F_{p^m}$, not both zero, and so, $x_2=x_1+\gamma$, where $\gamma=L^{-1}\left(\frac{u(\alpha-c\beta)}{1-c}\right)\in \F_{p^m}^*$.

Furthermore,
\begin{align*}
f(x+\gamma)&=\sum_{i=1}^t\left(\Tr_{q^n/q^m}(x+\gamma)^{k_i}+\delta_i \right)^{s_i}\\
&=\sum_{i=1}^t\left(\left(\Tr_{q^n/q^m}(x)+\Tr_{q^n/q^m}(\gamma)\right)^{k_i}+\delta_i \right)^{s_i}=f(x),
\end{align*}
 since $n/m$ is divisible by $p$, and so, $\Tr_{q^n/q^m}(\gamma)=\gamma\Tr_{q^n/q^m}(1)=0$, for $\gamma\in\F_{p^m}$.
Therefore, ${_c}D_a f(x_1)-{_c}D_af(x_2)=f(x_1+a)-f(x_1+a+\gamma)-c\left(f(x_1)-f(x_1+\gamma)\right)=0$, and so, it cannot equal ${_c}D_a F(x_2)-{_c}D_aF(x_1)=(1-c)\gamma\neq 0$. 
In fact, the equation ${_c}D_aF(x_1)=b-{_c}D_a f(x_1)=b-{_c}D_a f(x_2)={_c}D_aF(x_2)$ holds, but that is impossible if $x_1\neq x_2$, since $F$ is PcN. The claim is shown.
\end{proof}

\begin{rem}
 If   $m=1$ \textup{(}hence, the parameters $s_i,k_i$ can be taken to be all~$1$\textup{)}, the previous proof can be easily rewritten, using Theorem~\textup{\ref{thm:gen}}, to show that the classes of Theorem~\textup{\ref{thm:LW}} are either APcN or PcN, for all $c\neq 1$ \textup{(}thus, complementing Theorem~\textup{\ref{thm:sum_traces}}\textup{)}.
\end{rem}

\section{Further comments}

In this paper we generalize Dillon's switching method (see Edel and Pott~\cite{EP09} for details on it) in even and odd characteristic, to provide necessary and sufficient conditions for a function modified in some components to have $c$-differential uniformity equal to $\delta$. We then use this approach to provide many classes of PcN and APcN for even and odd characteristic. As a by-product we generalize some prior results. It would surely  be interesting if more classes of functions can be found via this (or similar) component modifications of a vectorial Boolean or $p$-ary function.

\end{document}